\newtheorem{theorem}{Theorem}[section]
\newtheorem{remark}{Remark}[section]
\newtheorem{lemma}[theorem]{Lemma}
\newtheorem{definition}{Definition}[section]
\newtheorem{proposition}[theorem]{Proposition}
\def\labelenumi{\theenumi}
\numberwithin{equation}{section}
\begin{document}

\title[Hotelling's model]
{Necessary and sufficient condition for equilibrium of the Hotelling model}
\author{Satoshi Hayashi${}^{\dagger}$}
\address{${}^{\dagger\ddagger}$Department of Mathematics Education, 
	Faculty of Education, Gifu University, 1-1 Yanagido, Gifu
	Gifu 501-1193 Japan.}
\email{${}^{\dagger}$x1131023@edu.gifu-u.ac.jp}

\author{Naoki Tsuge${}^{\ddagger}$}

\email{${}^{\ddagger}$tuge@gifu-u.ac.jp}
\thanks{
	N. Tsuge's research is partially supported by Grant-in-Aid for Scientific 
	Research (C) 17K05315, Japan.
}

\keywords{The Hotelling model, equilibrium, Mathematical formulation.}

\date{}

\maketitle

\begin{abstract}
	We study a model of vendors competing to sell a homogeneous product to customers spread 
	evenly along a linear city. This model is based on Hotelling's celebrated paper in 1929.  Our aim in this paper is to present a necessary and sufficient condition for the equilibrium. This yields a representation for the equilibrium. To achieve this, we first formulate the model mathematically. Next, we prove that the condition holds if and only if 
	vendors are equilibrium. 
\end{abstract}


\section{Introduction}
We study a model in which a linear city of length $1$ on a line and 
customers are uniformly distributed with density $1$ along this interval. 
We consider $n$ vendors moving on this line. Let the location of 
the vendor $k\quad(k=1,2,3,\ldots,n)$ be $x_k\in[0,1]$.  We assume that $x_1\leq x_2\leq \cdots\leq x_n$ and  denote the location of $n$ vendors $(x_1,x_2,x_3,\ldots,x_n)$ by ${\bm x}$. Since we study the 
competition between vendors, we consider $n\geq2$ in particular.
The price of one unit of product for each vendor is 
identical. Moreover, we assume the following. 
\vspace*{1ex}

If there exist $l\quad(l=1,2,3,\ldots,n)$ vendors nearest to 
a customer, the customer  
 purchases $1/l$ unit of product per unit of time from each of the $l$ vendors respectively. \vspace*{-1ex}

Every vendor then seeks a location to maximize his profit.

We then represent the profit of  vendor $k$ per unit of time by a mathematical notation.
Given a vector ${\bm \xi}=(\xi_1,\xi_2,\ldots,\xi_n)\in[0,1]^n$ and $0\leq y\leq1$, 
we define a set $\displaystyle S({\bm \xi},y)=\{j\in\{1,2,3,\ldots,n\}:|\xi_j-y|=\min_{i}|\xi_i-y|\}$.
By using a density function 
\begin{align*}
\rho_k({\bm \xi},y)=
\begin{cases}
\displaystyle 0&\displaystyle \text{if}\quad|\xi_k-y|>\min_{i}|\xi_i-y|\\
\displaystyle \frac{1}{|S({\bm \xi},y)|}&\displaystyle \text{if}\quad|\xi_k-y|=\min_{i}|\xi_i-y|
\end{cases},
\end{align*}
we define 
\begin{align}
f_k({\bm \xi})=\int^1_0\rho_k({\bm \xi},y)dy,
\end{align}
where $|A|$ represents a number of elements in a set $A$. We call $f_k({\bm x})$ the profit of vendor $k$ per unit of time for a location ${\bm x}$. We then define equilibrium as follows.
\begin{definition}
A location ${\bm x}^*=(x^*_1,x^*_2,x^*_3,\ldots,x^*_n)\in[0,1]^n$ is called equilibrium, if 
\begin{align}
f_k({\bm x}^*)\geq f_k(x^*_1,x^*_2,x^*_3,\ldots,x^*_{k-1},x_k,x^*_{k+1},\ldots,x^*_n)
\end{align}holds for any $k\in\{1,2,3,\ldots,n\}$ and $x_k\in[0,1]$.
\end{definition}

We review the known results. The present model is based on Hotelling's model in \cite{H}. Although we 
consider homogeneous vendors, Hotelling did heterogeneous vendors. In \cite[Chaper 10]{A}, Alonso, W. treated with the same model as our problem for two vendors. He introduced this model 
as the competition between two vendors of ice cream along a beach.
In \cite{EL2},  the model for $n$ vendors was studied. Furthermore, Eaton, B. C. and Lipsey, R. G. investigated a necessary and sufficient 
condition for equilibrium. More precisely, they claimed that (1.i) and (1.ii) in \cite[p29]{EL2} if and only if $n$ vendors are equilibrium (see also [p9]\cite{EL1}).
Although this is an interesting approach from the point of mathematical view, unfortunately, it seems that (1.i) and (1.ii) are not sufficient conditions. 
In fact, we consider a location 
\begin{align*}
{\bm x}=
\left(\frac{1}{10},\frac{1}{10},\frac{3}{10},\frac{3}{10},\frac{7}{10},\frac{7}{10},\frac{9}{10},\frac{9}{10}\right).
\end{align*}
 Then, we find that 
$\displaystyle f_1({\bm x})=f_2({\bm x})=f_7({\bm x})=f_8({\bm x})=1/{10},\;
f_3({\bm x})=f_4({\bm x})=f_5({\bm x})=f_6({\bm x})=3/{20}$. 
Therefore, 
this example satisfies (1.i) and (1.ii). \begin{remark}\normalfont
In \cite{EL2}, vendors in our problem are called firms and $f_k({\bm x})$ seems to be called a market of firm $k$. 
In addition, we regard two pairs of peripheral firms in (1.ii) of \cite{EL2} as firms $1,2$ and $9,10$. 
\end{remark}
On the other hand, if $x_3$ moves from $3/10$ to 
$\displaystyle 1/2$, $x_3$ can obtain a profit $\displaystyle 2/{10}$ more 
than the original one $\displaystyle 3/{20}$.
In addition, 
the definition of their terminologies seems not to be clear, such as market, peripheral, equilibrium, etc. 
Therefore, our goal in this paper is to formulate this model mathematically and present a revised necessary and sufficient for equilibrium.

For convenience, we set $x_0=0,\;x_{n+1}=1$ and denote a interval $[x_k,x_{k+1}]$ by $I_k\;\;(k=0,1,2,\ldots,n)$. Then our main theorem is as follows.
\begin{theorem}\label{thm:main}${}$
\begin{enumerate}
\item $n=2$\vspace*{-2ex}
\begin{align}
{\bm x}=\left(\frac12,\frac12\right)\text{ is a unique equilibrium.}\label{eqn:main3}
\end{align}
\item $n=3$\vspace*{-2ex}
\begin{align}
\text{There exits no equilibrium.}\label{eqn:main4}
\end{align} 
	\item $n\geq4$

	${\bm x}$ is equilibrium, if and only if the following conditions \eqref{eqn:main1} and \eqref{eqn:main2} hold.
	\begin{align}
	&\begin{array}{ll}
	&|I_0|=|I_n|>0\;(n\geq 2),\quad|I_1|=|I_{n-1}|=0\quad(n\geq 4),\\
	&|I_0|:|I_2|=1:2,\quad|I_{n-2}|:|I_n|=2:1\quad(n\geq 4),
	\end{array}
	\label{eqn:main1}\\
	&\hspace*{3.5ex}|I_j|\leq2|I_0|\quad(0\leq j\leq n),\quad2|I_0|\leq|I_k|+|I_{k+1}|\quad(1\leq k\leq n-2).
	\label{eqn:main2}
	\end{align}
\end{enumerate}
\end{theorem}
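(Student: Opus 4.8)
The plan here rests on one elementary fact: if, after a deviation, some vendor is the \emph{unique} leftmost one, sitting at $p$ with nearest right-hand occupied point $q>p$, then its profit equals $\tfrac12(p+q)$ and strictly increases if it edges a little to the right (symmetrically for a unique rightmost vendor). Consequently, in any equilibrium the leftmost occupied point and the rightmost occupied point each carry at least two vendors. For $n=3$ this forces all three vendors onto a single point $c$, and then a vendor moving slightly toward whichever of $0,1$ is farther captures nearly $\max\{c,1-c\}\ge\tfrac12>\tfrac13$, contradicting equilibrium; this gives \eqref{eqn:main4}. For $n=2$ only $x_1=x_2=c$ survives the dichotomy, and ``no profitable small move left'' forces $c\le\tfrac12$ while ``no profitable small move right'' forces $c\ge\tfrac12$; since $(\tfrac12,\tfrac12)$ is immediately checked to be an equilibrium, this is \eqref{eqn:main3}.

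\textbf{The case $n\ge4$: necessity.} I would first record the profit bookkeeping after a deviation: a deviator alone strictly between two occupied points $u<v$ earns exactly $\tfrac12(v-u)$; a deviator that becomes the unique leftmost vendor at $p$, with right neighbour $q$, earns $\tfrac12(p+q)$; and $m$ vendors sharing an occupied point with occupied neighbours $u<v$ each earn $\tfrac1{2m}(v-u)$, with the evident corrections at $0$ and $1$. Now suppose ${\bm x}$ is an equilibrium. The unique-extreme fact gives $|I_1|=|I_{n-1}|=0$. If $x_1=0$, the leftmost cluster strictly gains by sliding a little into $I_2$, so $|I_0|>0$, and symmetrically $|I_n|>0$. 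If the leftmost cluster had size $j\ge3$, then vendor $1$'s best leftward move (profit $\to x_1$) together with its move into $(x_1,x_{j+1})$ would impose the incompatible bounds $x_{j+1}\ge(2j-1)x_1$ and $x_{j+1}\le\tfrac{j+1}{j-1}x_1$; hence it is a pair, i.e.\ $x_2<x_3$, and symmetrically $x_{n-2}<x_{n-1}$. For the left pair, ``no gain moving left'' gives $x_3\ge3x_1$ and ``no gain moving into $I_2$'' gives $x_3\le3x_1$, so $|I_2|=2|I_0|$ and $f_1=f_2=|I_0|$; symmetrically $|I_{n-2}|=2|I_n|$ and $f_{n-1}=f_n=|I_n|$. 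Slipping vendor $1$ just past the right pair captures nearly $|I_n|$, so $|I_0|\ge|I_n|$, and symmetrically $|I_0|=|I_n|$; this is \eqref{eqn:main1}. For \eqref{eqn:main2}: slipping any vendor just left of $x_1$ captures nearly $|I_0|$, so $f_k\ge|I_0|$ for all $k$; no interior point carries $m\ge3$ vendors, since peeling one off to each side yields $(m-1)(v-p)\le p-u$ and $(m-1)(p-u)\le v-p$, whose product forces $m\le2$; and sending vendor $1$ to the midpoint of $I_j$ ($2\le j\le n-1$) it earns $\tfrac12|I_j|$, so $|I_j|\le2|I_0|$ for every $j$. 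Finally, for $1\le k\le n-2$, write $|I_k|+|I_{k+1}|=x_{k+2}-x_k$; if $x_k<x_{k+1}<x_{k+2}$ then vendor $k+1$ earns $\tfrac12(x_{k+2}-x_k)\ge|I_0|$, while if two of these coincide, the no-triple property makes them an interior pair (with $k=1$ or $k=n-2$ giving equality straight from \eqref{eqn:main1}) whose members each earn $\tfrac14(x_{k+2}-x_{k-1})\ge|I_0|$, so $x_{k+2}-x_{k-1}\ge4|I_0|$, and subtracting $|I_{k-1}|\le2|I_0|$ gives $|I_k|+|I_{k+1}|\ge2|I_0|$.

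\textbf{The case $n\ge4$: sufficiency.} Conversely, I would note that \eqref{eqn:main1}--\eqref{eqn:main2} force $x_1=x_2=|I_0|$, $x_3=3|I_0|$, $x_{n-2}=1-3|I_0|$, $x_{n-1}=x_n=1-|I_0|$, and that the lower bounds in \eqref{eqn:main2} make every vendor earn at least $|I_0|$ (a middle singleton $k$ earns $\tfrac12(|I_{k-1}|+|I_k|)\ge|I_0|$; a middle pair earns $\tfrac14(|I_{k-1}|+|I_{k+1}|)\ge|I_0|$, since both $|I_{k-1}|$ and $|I_{k+1}|$ are $\ge2|I_0|$). Then I would rule out any deviation that beats $|I_0|$: a deviator landing inside a gap $I_m$ earns $\tfrac12|I_m|\le|I_0|$ by the upper bound in \eqref{eqn:main2} (when it is itself an endpoint of $I_m$, or a member of a pair sliding sideways, the span in play is $|I_{m-1}|+|I_m|$ or $|I_m|+|I_{m+1}|$, and the same bounds, now read as $|I_m|\le2|I_0|\le|I_{m\mp2}|$, close it); a deviator landing in $I_0$, in $I_n$, or beyond an extreme pair becomes a unique extreme vendor and earns strictly less than $|I_0|$; a deviator landing on an occupied point joins a cluster of new size $m+1$ whose per-vendor share is bounded by $|I_0|$ via $x_3=3|I_0|$, $|I_0|=|I_n|$ and $|I_j|\le2|I_0|$. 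When $n=4$ there are no middle vendors, \eqref{eqn:main2} follows automatically from \eqref{eqn:main1}, the configuration is the unique $(\tfrac14,\tfrac14,\tfrac34,\tfrac34)$, and all checks are immediate.

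\textbf{Expected obstacle.} The one step that is not a single-deviation computation is the lower bound $2|I_0|\le|I_k|+|I_{k+1}|$ in \eqref{eqn:main2}: it emerges only from the conjunction of ``$f_k\ge|I_0|$ for all $k$'', ``no cluster of size $\ge3$'', and ``$|I_j|\le2|I_0|$'', together with a careful case split between a middle singleton and a member of a middle pair. On the sufficiency side the real labour — and where slips would hide — is the bookkeeping for a deviator that is an endpoint of the gap it enters or a member of the cluster it leaves, where one must invoke exactly the right instance of \eqref{eqn:main2}; arranging the case analysis so that every branch collapses to one of the three inequalities in \eqref{eqn:main1}--\eqref{eqn:main2} is what makes the argument go through.
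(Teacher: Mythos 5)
Your proposal is correct and follows essentially the same route as the paper: the same structural facts (no vendor at $0$ or $1$, each extreme occupied point carries exactly two vendors, no point carries three), the same pivotal inequality $f_k(\bm{x})\geq|I_0|$, and the same deviation bookkeeping in both directions, matching the paper's Proposition 2.2 and Lemmas 2.3--2.4. The only differences are organizational --- you derive the necessary conditions forward from the equilibrium inequalities where the paper argues by contraposition on each clause of \eqref{eqn:main1}--\eqref{eqn:main2}, and your multiplicative argument $(m-1)^2\leq1$ excluding interior triples is slicker than the paper's four-case analysis --- but the underlying computations coincide.
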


\section{Preliminary}
In this section, we prepare some lemmas and a proposition to prove our main theorem in a next section. 
We first consider the profit of $i$ vendors which locate at one point. We have the following lemma. 
\begin{lemma}
	We consider a location ${\bm x}=(x_1,x_2,x_3,\ldots,x_n)$ with $x_1\leq x_2\leq x_3\leq\dots\leq x_n$.
	We assume that $x_l<x_{l+1}=\cdots = x_k = \cdots = x_{l+i}<x_{l+i+1}\quad(n\geq2,\;l\geq0,\;l+i+1\leq n+1,\;1\leq i\leq n)$.
	\begin{enumerate}
		\item If $x_l\ne x_0$ and $x_{l+i+1}\ne x_{n+1}$, $\displaystyle f_k({\bm x})=\frac1{2i}(|I_l|+|I_{l+i}|)$.
		\item If $x_l=x_0$ and $x_{l+i+1}\ne x_{n+1}$, $\displaystyle f_k({\bm x})=\frac1{i}\left(|I_l|+\frac12|I_{l+i}|\right)$.
		\item If $x_l\ne x_0$ and $x_{l+i+1}=x_{n+1}$, $\displaystyle f_k({\bm x})=\frac1{i}\left(\frac12|I_l|+|I_{l+i}|\right)$.
		\item If $x_l=x_0$ and $x_{l+i+1}=x_{n+1}$, $\displaystyle f_k({\bm x})=\frac1{n}$. 
	\end{enumerate}	
\end{lemma}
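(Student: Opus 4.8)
The plan is to evaluate $f_k({\bm x})=\int_0^1\rho_k({\bm x},y)\,dy$ directly, by first describing the set of customers for which the $i$ coincident vendors at the common point $p:=x_{l+1}=\cdots=x_{l+i}$ are (among the) nearest. Throughout put $|I_l|=p-x_l$ and $|I_{l+i}|=x_{l+i+1}-p$, and read ``$x_l=x_0$'' as ``$l=0$'' (no vendor lies strictly to the left of the cluster) and ``$x_{l+i+1}=x_{n+1}$'' as ``$l+i=n$'' --- this is the reading under which the four identities hold.

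The key step is a ``territory'' computation that uses the ordering $x_1\le\cdots\le x_n$. For $y\notin(x_l,x_{l+i+1})$ (when the relevant flanking vendor exists) the vendor at $x_l$, respectively at $x_{l+i+1}$, is strictly closer to $y$ than $p$ is, so $\rho_k({\bm x},y)=0$ there. For $y\in(x_l,x_{l+i+1})$, monotonicity of $x_j\mapsto|y-x_j|$ on either side of $y$ forces $\min_{1\le j\le n}|x_j-y|$ to be attained only by vendors located at $x_l$, at $p$, or at $x_{l+i+1}$; comparing $|y-p|$ with $|y-x_l|$ and with $|y-x_{l+i+1}|$ then shows that the cluster at $p$ is the \emph{unique} nearest set precisely on the open interval $J:=\bigl((x_l+p)/2,\,(p+x_{l+i+1})/2\bigr)$, and is not nearest outside $\overline J$. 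Hence $\rho_k({\bm x},\cdot)=1/i$ on $J$ (there $S({\bm x},y)=\{l+1,\dots,l+i\}$) and $\rho_k({\bm x},\cdot)=0$ off $\overline J$; since $\partial J$ has measure zero, $f_k({\bm x})=\frac1i\,|J|=\frac1{2i}\bigl(|I_l|+|I_{l+i}|\bigr)$, which is (i).

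For (ii) and (iii) the same computation applies, but a flanking group is replaced by an endpoint of $[0,1]$: if $l=0$ there is no vendor in $[0,p)$, so the cluster is the unique nearest set on all of $I_l=[0,p]$ rather than on its right half only, the territory becomes $[0,\,(p+x_{l+i+1})/2)$ of length $|I_l|+\frac12|I_{l+i}|$, and (ii) follows; (iii) is the mirror image. If $l=0$ and $l+i=n$ simultaneously, then $i=n$, every customer has $S({\bm x},y)=\{1,\dots,n\}$, so $\rho_k\equiv 1/n$ and $f_k({\bm x})=1/n$, which is (iv); note this is consistent with (ii)/(iii) via $|I_0|+|I_n|=1$.

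I do not expect a serious obstacle here; this is essentially a careful bookkeeping lemma. The two points needing attention are (a) justifying that the cluster's territory is contained in $(x_l,x_{l+i+1})$ and is cut off exactly at the two midpoints --- which is exactly where the monotone ordering of the $x_j$ is used --- and (b) noting that the finitely many points (the midpoints and the shared vendor locations), where $|S({\bm x},y)|$ can jump, do not affect the integral. Everything else is substitution of the interval lengths.
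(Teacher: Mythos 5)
Your proof is correct and follows essentially the same approach as the paper: the paper's own proof simply asserts that the cluster at $p$ serves the half-intervals up to the midpoints with the flanking vendors (or the full interval up to an endpoint of $[0,1]$), divided equally among the $i$ coincident vendors, which is exactly the ``territory'' computation you carry out in detail. Your explicit justification of why the territory is cut at the two midpoints, and your remark that the stated identities require reading ``$x_l=x_0$'' as ``no vendor lies strictly left of the cluster,'' are careful additions but not a different method.
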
\begin{proof}${}$

	{\it Proof of $(\mathrm{i})$}

We have $f_k({\bm x})=\dfrac{1}{i}\left(\dfrac{1}{2}|I_l|+\dfrac{1}{2}|I_{l+i}|\right)=\dfrac{1}{2i}\left(|I_l|+|I_{l+i}|\right)$.

	{\it Proof of $(\mathrm{ii})$}

	We have $f_k({\bm x})=\dfrac{1}{i}\left(|I_l|+\dfrac{1}{2}|I_{l+i}|\right)=\dfrac{1}{i}\left(|I_l|+\dfrac{1}{2}|I_{l+i}|\right)$.

	{\it Proof of $(\mathrm{iii})$}\\
	
	We have $f_k({\bm x})=\dfrac{1}{i}\left(\dfrac{1}{2}|I_{l}|+|I_{l+i}|\right)=\dfrac{1}{i}\left(\dfrac{1}{2}|I_l|+|I_{l+i}|\right)$.
		
	{\it Proof of $(\mathrm{iv})$}

	
	We have $f_k({\bm x})=\dfrac{1}{n}\left(|I_0|+|I_n|\right)=\dfrac{1}{n}$.
	
\end{proof}

Next, the following proposition play an important role.

\begin{proposition}\label{pro:1}
	If the location  of $n$ vendors ${\bm x}=(x_1,x_2,x_3,\ldots,x_n)$ with $x_1\leq x_2\leq x_3\leq\dots\leq x_n\quad(n\geq2)$ is equilibrium,  
	the following holds.
	\begin{align}
	&x_1\ne0\text{ and }x_n\ne1.
	\label{eqn:pro1}\\
	&\text{No more than 2 vendors can occupy a location.}
	\label{eqn:pro2}\\
	&x_1=x_2\text{ and }x_{n-1}=x_n.
	\label{eqn:pro3} 
	\end{align}
\end{proposition}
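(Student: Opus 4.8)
The plan is to prove \eqref{eqn:pro1}, \eqref{eqn:pro2}, \eqref{eqn:pro3} by contradiction, in that order. Assuming ${\bm x}$ is equilibrium while the stated property fails, I would exhibit for some vendor $k$ a relocation $x_k'\in[0,1]$ with $f_k(x_1,\dots,x_{k-1},x_k',x_{k+1},\dots,x_n)>f_k({\bm x})$. In every case $f_k({\bm x})$ is read off from Lemma~2.1, while the profit after relocation is computed directly from the geometry: a vendor placed alone at a point $t$ occupied by no one else captures exactly the interval cut out by the midpoints between $t$ and its nearest left and right occupied points, truncated by $0$ or $1$ when no such point exists. Both \eqref{eqn:pro2} and \eqref{eqn:pro3} will use \eqref{eqn:pro1}, but neither uses the other, and the reflection $y\mapsto 1-y$ lets me treat only the ``left'' half of each assertion. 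I expect \eqref{eqn:pro2} to be the part requiring the most care.

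For \eqref{eqn:pro1}, suppose $x_1=0$ and let $i$ be maximal with $x_1=\cdots=x_i=0$. If $i=n$, every vendor earns $1/n$ by Lemma~2.1(iv), whereas vendor~$1$ moved to a small $t>0$ serves $[t/2,1]$ and earns $1-t/2$, which exceeds $1/n$ once $t$ is small---a contradiction. If $i<n$, then Lemma~2.1(ii) gives each of vendors $1,\dots,i$ the profit $\tfrac1{2i}\,x_{i+1}$; moving vendor~$1$ to any $t\in(0,x_{i+1})$ makes it serve the interval of length $x_{i+1}/2$ wedged between the remaining cluster at $0$ and the vendor at $x_{i+1}$ when $i\ge 2$, and the interval $[0,(t+x_{i+1})/2]$ when $i=1$; in both cases the new profit strictly exceeds $\tfrac1{2i}\,x_{i+1}$. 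Hence $x_1\ne 0$, and symmetrically $x_n\ne 1$.

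For \eqref{eqn:pro2}, suppose a point $p$ is occupied by $i\ge 3$ vendors. By \eqref{eqn:pro1}, $p\in(0,1)$; maximality of the cluster makes Lemma~2.1 applicable, and it tells us in effect that these vendors collectively serve a block $J=[a,b]$ of customers---with $a=0$, resp.\ $b=1$, precisely when the cluster abuts the left, resp.\ right, end of the city---and that each of them earns $f=|J|/i>0$. Now relocate one cluster member just inside $J$, to $p-\varepsilon$ or to $p+\varepsilon$ with $\varepsilon>0$ small: it then serves the part of $J$ lying on the chosen side of $p$, of length $p-a$, resp.\ $b-p$ (exactly so when a rival vendor borders that side, and to within $O(\varepsilon)$ otherwise). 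Since $(p-a)+(b-p)=|J|=i\,f$, the larger of these two lengths is at least $\tfrac{i}{2}f\ge\tfrac32 f>f$, so for a suitable $\varepsilon$ this relocation strictly raises that vendor's profit, contradicting equilibrium. Therefore no point is occupied by more than two vendors.

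For \eqref{eqn:pro3}, assume $x_1<x_2$. By \eqref{eqn:pro1}, $x_1>0$ and vendor~$1$ stands alone, so Lemma~2.1(ii) gives $f_1({\bm x})=(x_1+x_2)/2$; relocating vendor~$1$ to $t=(x_1+x_2)/2\in(x_1,x_2)$ keeps it the unique leftmost vendor, now serving $[0,(t+x_2)/2]$ and hence earning $(x_1+3x_2)/4=f_1({\bm x})+(x_2-x_1)/4>f_1({\bm x})$, a contradiction. Thus $x_1=x_2$, and symmetrically $x_{n-1}=x_n$. The only routine residue in all three arguments is the local bookkeeping: choosing the perturbation small enough---below the distance to the neighbouring occupied point and below twice that distance---that the relocated vendor is strictly nearest along the whole claimed interval rather than tied on part of it, and, for \eqref{eqn:pro2}, verifying the identity $f=|J|/i$ in each of the four configurations of Lemma~2.1 (interior cluster; cluster meeting $0$; cluster meeting $1$; cluster of all $n$ vendors), which is exactly what makes that argument uniform across the boundary cases.
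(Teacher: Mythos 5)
Your proposal is correct: all three parts are proved by exhibiting, for each violation, a unilateral relocation that strictly increases some vendor's profit, which is exactly the paper's strategy, and your arguments for \eqref{eqn:pro1} and \eqref{eqn:pro3} coincide with the paper's almost verbatim (same deviation points, same profit computations). The one genuine difference is in \eqref{eqn:pro2}. The paper splits the cluster of $i\ge 3$ co-located vendors into four cases according to whether each side of the cluster is bounded by a vendor or by an endpoint of the city, and within each case further compares $|I_l|$ with $|I_{l+i}|$ to decide which side to deviate to, choosing ad hoc relocation points such as $x_k'=\tfrac{2}{3}|I_0|$ or $x_k'=\tfrac{2}{5}$. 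You instead observe that in every configuration the cluster jointly serves a block $J$ with each member earning $f=|J|/i$, that $J$ splits at the cluster point $p$ into two pieces of total length $if$, so the larger piece has length at least $\tfrac{i}{2}f\ge\tfrac{3}{2}f>f$, and that a member deviating to $p\mp\varepsilon$ captures that piece exactly (vendor-bounded side) or up to an $O(\varepsilon)$ loss (boundary side), which is absorbed since $f>0$. This buys a single uniform argument in place of the paper's eight subcases, at the cost of the small bookkeeping you already flag (choosing $\varepsilon$ below the distance to the neighbouring occupied point and below $2(\max(p-a,b-p)-f)$); both routes rest on the same profit formulas from Lemma 2.1.
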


\begin{proof}
${}$

	{\it Proof of \eqref{eqn:pro1}}	\\
	If $x_1=0$, we show that $ {\bm x} $ is not equilibrium.\\
	\begin{enumerate}
		\renewcommand{\labelenumi}{(\roman{enumi})}
		\item $x_1=0$ and $x_2\neq0$
		We notice that $f_1({\bm x})=\dfrac{1}{2}|I_1|$. Setting $x_1'=\dfrac{1}{2}|I_1|$, we then have
		\begin{align*}
		f_1(x_1',x_2,\cdots,x_n)&=|[0,x_1']|+\dfrac{1}{2}|[x_1',x_2]|
		>|[0,x_1']|
		>\dfrac{1}{2}|I_1|=f_1({\bm x}),
		\end{align*}
		where $|I|$ represents the length of a interval $I$.
		
		\item $x_1=\cdots=x_i=0$ and $x_{i+1}\neq0$  $(2\leq i\leq n-1)$
		

We notice that $f_1({\bm x})=\dfrac{1}{2i}|I_i|$. Setting $x_1'\in{(0,x_{i+1})}$, we have
		\begin{align*}
		f_1(x_1',x_2,\cdots,x_n)=\dfrac{1}{2}(|[0,x_1']|+|[x_1',x_{i+1}]|)
		=\dfrac{1}{2}|[0,x_{i+1}]|
		=\dfrac{1}{2}|I_i|
		>\dfrac{1}{2i}|I_i|=f_1({\bm x}).
		\end{align*}
		
\item $x_1=\cdots=x_n=0$  $(n\geq2)$

We notice that $f_1({\bm x})=\dfrac{1}{n}$. Setting $x_1'=\dfrac{1}{2}$, we have
		\begin{align*}
		f_1(x_1',x_2,\cdots,x_n)=\dfrac{1}{2}\left|\left[0,\dfrac{1}{2}\right]\right|+\left|\left[\dfrac{1}{2},1\right]\right|
		=\dfrac{1}{4}+\dfrac{1}{2}
		>\dfrac{1}{2}
		\geq\dfrac{1}{n}=f_1({\bm x}).
		\end{align*}
		
	\end{enumerate}
	From (i)--(iii), if $x_1=0$, we have proved that $ {\bm x} $ is not equilibrium. 
	Furthermore, from the symmetry, if $x_n=1$, we can similarly prove that $ {\bm x} $ is not equilibrium.\\
	\\
	{\it Proof of \eqref{eqn:pro2}}	\\
	We prove that $ {\bm x} $ is not equilibrium, provided that $i\;(3\leq i\leq n)$ vendors occupy at a point. We assume that $x_l<x_{l+1}=\cdots = x_k = \cdots = x_{l+i}<x_{l+i+1}\quad(l\geq0,\;l+i+1\leq n+1,\;3\leq i\leq n)$.
	Here we recall that we set $x_0=0$ and $x_{n+1}=1$. Therefore there exist $x_l$ and $x_{l+i+1}$ at least one respectively. We notice 
	that $x_l<x_k<x_{l+i+1}$ and there exists no vendor on $(x_l,x_k)$ and $(x_k,x_{l+i+1})$. Dividing this 
	proof into four cases, we prove
	\eqref{eqn:pro2}.
	\begin{enumerate}
		\item $x_l\neq x_0$ and $x_{l+i+1}\neq x_{n+1}$
	In this case, if $|I_l|\geq|I_{l+i}|$, we notice that $f_k({\bm x})=\dfrac{1}{2i}(|I_l|+|I_{l+i}|)\leq \dfrac{1}{6}(|I_l|+|I_{l+i}|)=\dfrac{1}{6}(|I_l|+|I_l|)=\dfrac{1}{3}|I_l|$.
		Setting $x_k'\in{(x_l,x_{l+1})}$, we have
		$f_k(x_1,\cdots,x_{k-1},x_k',x_{k+1},\cdots,x_n)=\dfrac{1}{2}(|[x_l,x_k']|+|[x_k',x_k]|)
		=\dfrac{1}{2}|I_l|>f_k({\bm x})$.

		For the other case $|I_l|<|I_{l+i}|$, from the symmetry, we can similarly show that 
		there exists $x_k'$ such that $f_k(x_1,\cdots,x_{k-1},x_k',x_{k+1},\cdots,x_n)>f_k({\bm x})$.
		Thus, if $x_l\neq x_0$ and $x_{l+i+1}\neq x_{n+1}$, we can prove that $ {\bm x} $ is not equilibrium.
		
		\item $x_l=x_0$ and $x_{l+i+1}\neq x_{n+1}$
		

		In this case, if $|I_l|\geq|I_{l+i}|$, we find that 
		$f_k({\bm x})=\dfrac{1}{i}(|I_l|+\dfrac{1}{2}|I_{l+i}|)\leq\dfrac{1}{3}(|I_l|+\dfrac{1}{2}|I_{l+i}|)\leq\dfrac{1}{3}(|I_l|+\dfrac{1}{2}|I_l|)=\dfrac{1}{2}|I_l|$. Setting $x_k'=\dfrac{2}{3}|I_l|$, we have 
        \begin{align*}
		f_k(x_1,\cdots,x_{k-1},x_k',x_{k+1},\cdots,x_n)&=|[0,x_k']|+\dfrac{1}{2}|[x_k',x_k]|
		>|[0,x_k']|\\
		&=\dfrac{2}{3}|I_l|\geq f_k({\bm x}).
		\end{align*}
	For the other case $|I_l|<|I_{l+i}|$, from the symmetry, we can similarly show that 
	there exists $x_k'$ such that $f_k(x_1,\cdots,x_{k-1},x_k',x_{k+1},\cdots,x_n)>f_k({\bm x})$.
	Thus, if $x_l=x_0$ and $x_{l+i+1}\neq x_{n+1}$, we have showed that $ {\bm x} $ is not equilibrium. 
	\item $x_l\neq x_0$ and $x_{l+i+1}=x_{n+1}$

	From the symmetry of $(\mathrm{ii})$, there exists $x_k'$ satisfying\\ $f_k(x_1,\cdots,x_{k-1},x_k',x_{k+1},\cdots,x_n)>f_k({\bm x})$.
	Thus if $x_l\neq x_0$ and $x_{l+i+1}=x_{n+1}$, we have showed that $ {\bm x} $ is not equilibrium.
	
	\item $x_l=x_0$ and $x_{l+i+1}=x_{n+1}$
	

    In this case, we find $f_k({\bm x})=\dfrac{1}{n}$. If $x_k\geq\dfrac{1}{2}$, for $x_k'=\dfrac{2}{5}$, we have 
		\begin{align*}
		f_k(x_1,\cdots,x_{k-1},x_k',x_{k+1},\cdots,x_n)&=|[0,x_k']|+\dfrac{1}{2}|[x_k',x_k]|
		>|[0,x_k']|\\
		&=\dfrac{2}{5}\geq\dfrac{1}{n}=f_k({\bm x}).
		\end{align*}
	
	For the other case $x_k<\dfrac{1}{2}$, from the symmetry, we can similarly show that there exists $x_k'$ such that $f_k(x_1,\cdots,x_{k-1},x_k',x_{k+1},\cdots,x_n)>f_k({\bm x})$.
	Thus, if $x_l=x_0$ and $x_{l+i+1}=x_{n+1}$, we have showed that $ {\bm x} $ is not equilibrium.
\end{enumerate}
	From (i)--(iv), we can complete the proof of \eqref{eqn:pro2}.
\\
{\it Proof of \eqref{eqn:pro3}}	\\
If $x_1<x_2$, we show that $ {\bm x} $ is not equilibrium. $(x_1\neq0)$

We notice that $f_1({\bm x})=|I_0|+\dfrac{1}{2}|I_1|.$
Setting $x_1'=|I_0|+\dfrac{1}{2}|I_1|$, we have\\ 
$f_1(x_1',x_2,\cdots,x_n)=|[0,x_1']|+\dfrac{1}{2}|[x_1',x_2]|>|[0,x_1']|=|I_0|+\dfrac{1}{2}|I_1|=f_1({\bm x})$.
Thus $ {\bm x} $ is not equilibrium.
If $x_{n-1}<x_n$, we can similarly prove that $ {\bm x} $ is not equilibrium $(x_n\neq1)$.
\end{proof}


Finally, we compare a location after the movement of a vendor with the original one. To do this, we introduce the following notation.

For a given location  of $n$ vendors ${\bm x}=(x_1,x_2,x_3,\ldots,x_n)$ with $x_1\leq x_2\leq \cdots\leq x_n$, we move vendor $k$ from $x_k$ to a point in 
$A\subset [0,1]$. We denote the resultant location by $x_k\rightarrow A$. We notice that $x_k\rightarrow A$ represents the following vector
\begin{align}
(x_1,x_2,,\ldots,x_{k-1},x'_k,x_{k+1},\ldots,x_n)
,
\end{align}
where $x'_k$ is a location of vendor $k$ after movement and $x'_k\in A$.

Then we have the following lemmas. Since their proofs are a little complicated, they are postponed to Appendix.
\begin{lemma}\label{lem:2}
	If a location of $n$ vendors ${\bm x}=(x_1,x_2,x_3,\ldots,x_n)$ satisfies $\eqref{eqn:main1}$ and $\eqref{eqn:main2}$, $|I_0|\leq f_k({\bm x})\quad(1\leq k\leq n)$.
\end{lemma}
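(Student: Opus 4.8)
The plan is to reduce the bound to the explicit formulas for $f_k({\bm x})$ recorded in the first lemma of this section, and then to check each case against \eqref{eqn:main1}--\eqref{eqn:main2}. Throughout I would work in the regime $n\ge4$, in which these conditions are imposed; the argument involves no analysis at all, only a careful bookkeeping of indices.

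\emph{Step 1: the cluster structure.} First I would record that under \eqref{eqn:main1}--\eqref{eqn:main2} no three vendors can occupy a common point: if $x_m=x_{m+1}=x_{m+2}$ then $|I_m|=|I_{m+1}|=0$, which contradicts $2|I_0|\le|I_m|+|I_{m+1}|$ when $1\le m\le n-2$ (since $|I_0|>0$), contradicts $|I_0|>0$ when $m=0$, and contradicts $|I_n|>0$ when $m=n-1$. Hence every vendor is isolated or belongs to a pair, so in the notation of the first lemma only the cases $i=1$ and $i=2$ arise. Next, $|I_1|=0<|I_0|$ forces $x_1=x_2>0$, so $\{1,2\}$ is a cluster by itself, and symmetrically $\{n-1,n\}$ is one with $x_{n-1}=x_n<1$. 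In particular, for any $k$ with $3\le k\le n-2$ the cluster of $k$ is disjoint from $\{1,2\}$ and from $\{n-1,n\}$, so the point immediately to its left is some $x_j\ge x_1>0$ and the point immediately to its right is some $x_{j'}\le x_n<1$; i.e.\ both neighbouring points of that cluster are interior.

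\emph{Step 2: the three groups.} For $k\in\{1,2\}$, case (ii) of the first lemma (with $l=0$, $i=2$) gives $f_k({\bm x})=\tfrac12\bigl(|I_0|+\tfrac12|I_2|\bigr)$, which equals $|I_0|$ by $|I_0|:|I_2|=1:2$; symmetrically, case (iii) gives $f_{n-1}({\bm x})=f_n({\bm x})=\tfrac12\bigl(\tfrac12|I_{n-2}|+|I_n|\bigr)=|I_0|$, using $|I_{n-2}|:|I_n|=2:1$ and $|I_n|=|I_0|$. For $3\le k\le n-2$, Step 1 shows case (i) of the first lemma applies. If $k$ is isolated, then $f_k({\bm x})=\tfrac12(|I_{k-1}|+|I_k|)\ge|I_0|$ straight from $2|I_0|\le|I_{k-1}|+|I_k|$, which is legitimate since $1\le k-1\le n-2$. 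If $k$ lies in a pair at a point $x_m=x_{m+1}$ with left neighbour $x_{m-1}$, then $|I_m|=0$ and $f_k({\bm x})=\tfrac14(|I_{m-1}|+|I_{m+1}|)$; since $2|I_0|\le|I_{m-1}|+|I_m|=|I_{m-1}|$ and $2|I_0|\le|I_m|+|I_{m+1}|=|I_{m+1}|$ (one checks that $3\le k\le n-2$ forces the indices $m-1$ and $m$ into $\{1,\dots,n-2\}$), we obtain $f_k({\bm x})\ge|I_0|$.

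The one point that genuinely needs care — and the only thing I would call an obstacle — is exactly this index bookkeeping: every appeal to the inequality $2|I_0|\le|I_k|+|I_{k+1}|$ of \eqref{eqn:main2} must have its index in $\{1,\dots,n-2\}$, and one must verify that the restriction $3\le k\le n-2$, together with the fact that $\{1,2\}$ and $\{n-1,n\}$ are clusters on their own, always guarantees this. This also disposes of the small cases automatically (for $n=4$ the middle range is empty and all four vendors fall into the first group). Once these verifications are made, each of the cases above is a single substitution.
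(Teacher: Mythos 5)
Your proposal is correct and follows essentially the same route as the paper's Appendix A proof: both reduce to the explicit profit formulas of Lemma 2.1, treat the boundary pairs $\{1,2\}$ and $\{n-1,n\}$ via the ratios $|I_0|:|I_2|=1:2$ and $|I_{n-2}|:|I_n|=2:1$, and split the interior vendors $3\le k\le n-2$ into the isolated and paired cases, invoking $2|I_0|\le|I_k|+|I_{k+1}|$. Your explicit Step 1 on the cluster structure and the index bookkeeping is a slightly more careful write-up of what the paper leaves implicit (e.g.\ its parenthetical $k\ne3$ in case (iv)(b)), and your use of the one-sided bounds $|I_{m-1}|,|I_{m+1}|\ge2|I_0|$ in place of the paper's exact equalities $|I_{k-2}|=|I_k|=2|I_0|$ is a harmless simplification.
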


\begin{lemma}\label{lem:3}
If a location of $n$ vendors ${\bm x}=(x_1,x_2,x_3,\ldots,x_n)$ satisfies $\eqref{eqn:main1}$ and $\eqref{eqn:main2}$, $f_k(x_k\rightarrow[0,1])\leq f_k({\bm x})\quad(1\leq k\leq n)$.
\end{lemma}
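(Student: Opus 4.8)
The plan is to fix the vendor $k$ and an arbitrary target $x_k'\in[0,1]$, set ${\bm y}=(x_1,\dots,x_{k-1},x_k',x_{k+1},\dots,x_n)$, and estimate $f_k({\bm y})$ from the position of $x_k'$ relative to the other $n-1$ vendors $\{x_i:i\ne k\}$, which are frozen. These frozen vendors partition $[0,1]$ into closed subintervals, say $J_0,\dots,J_{n-1}$ with $0\in J_0$ and $1\in J_{n-1}$. Because $x_1=x_2$ and $x_{n-1}=x_n$ by \eqref{eqn:main1}, removing any single vendor leaves the two outer interval lengths unchanged, so $|J_0|=|I_0|$ and $|J_{n-1}|=|I_n|=|I_0|$. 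The crucial structural remark is that each $J_m$ is either one of the original intervals $I_j$, or else the single merged interval $[x_{k-1},x_{k+1}]$, and the latter occurs precisely when $x_{k-1}<x_k<x_{k+1}$, i.e.\ when vendor $k$ stands alone. In that event $2\le k\le n-1$ and $x_{k-1}\ge x_1>0$, $x_{k+1}\le x_n<1$, so Lemma~2.1~(i) yields $f_k({\bm x})=\tfrac12\bigl(|I_{k-1}|+|I_k|\bigr)=\tfrac12\,\bigl|[x_{k-1},x_{k+1}]\bigr|$. Together with \eqref{eqn:main2} this gives, for every $J_m$ with $1\le m\le n-2$, either $|J_m|\le 2|I_0|$ or $\tfrac12|J_m|=f_k({\bm x})$; in particular $|J_m|\le2|I_0|$ for every $m$ that is not the merged interval, and $|J_0|=|J_{n-1}|=|I_0|$.

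I would then split according to where $x_k'$ lands. If $x_k'$ is interior to a middle interval $J_m$ ($1\le m\le n-2$), vendor $k$ alone fills its interior and $f_k({\bm y})=\tfrac12|J_m|$, hence $f_k({\bm y})\le|I_0|\le f_k({\bm x})$ by Lemma~\ref{lem:2} when $J_m$ is not the merged interval, and $f_k({\bm y})=f_k({\bm x})$ otherwise. If $x_k'$ lies in the tail $J_0$ but is not the leftmost frozen vendor, or symmetrically in $J_{n-1}$, then the market of vendor $k$ is a proper subinterval of $J_0$ (resp.\ $J_{n-1}$), so $f_k({\bm y})<|I_0|\le f_k({\bm x})$ (this mirrors the Proof of \eqref{eqn:pro1}). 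Finally, if $x_k'$ coincides with a cluster of $p\ge1$ frozen vendors, vendor $k$ receives the fraction $\tfrac1{p+1}\le\tfrac12$ of that cluster's market; at an interior cluster the market is $\tfrac12(|J_{m-1}|+|J_m|)$ with at most one of $J_{m-1},J_m$ merged, so \eqref{eqn:main2} (both parts, the second applied at the index $k-1$) gives $f_k({\bm y})\le\tfrac14(|J_{m-1}|+|J_m|)\le f_k({\bm x})$, while at the leftmost (resp.\ rightmost) cluster the market is $|I_0|+\tfrac12|J_1|$ (resp.\ $|I_0|+\tfrac12|J_{n-2}|$) and $f_k({\bm y})\le\tfrac12\bigl(|I_0|+\tfrac12|J_1|\bigr)\le f_k({\bm x})$, using $2|I_0|\le|I_{k-1}|+|I_k|$ in the case where $J_1$ is the merged interval. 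Combining the cases gives $f_k(x_k\to[0,1])\le f_k({\bm x})$.

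The per-case profit computations are routine: each is a one-line application of the midpoint rule already used in Lemma~2.1 and in the Proof of \eqref{eqn:pro1}. The real work, and the reason the lemma is deferred to the Appendix, is the degeneracy bookkeeping: deciding exactly when deleting vendor $k$ produces a long merged interval, tracking whether that interval is interior or abuts the boundary, and treating the subcases in which $x_k'$ lands on a vendor belonging to a coincident pair (so that the new cluster has three members) or on one of the two extreme clusters. I expect the cleanest write-up to record the single bound $\tfrac12|J_m|=f_k({\bm x})$ for the merged interval once and for all, and then dispatch every remaining configuration uniformly through $f_k({\bm y})\le|I_0|$ combined with Lemma~\ref{lem:2}.
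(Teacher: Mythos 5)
Your argument is correct and is essentially the paper's own proof, reorganized: the paper splits into (moving into a non‑adjacent open interval)/(moving within $(x_{k-1},x_{k+1})$)/(landing on another vendor), while you split by the partition induced by the $n-1$ frozen vendors and single out the one possibly ``merged'' interval $[x_{k-1},x_{k+1}]$, but the per‑case estimates ($\tfrac12|I_j|\le|I_0|$, $\tfrac14(2|I_0|+2f_k({\bm x}))\le f_k({\bm x})$, the tail bound, and the reduction to Lemma \ref{lem:2}) coincide with those in the paper's Appendix B. The merged‑interval bookkeeping is a mild streamlining, not a different method, so no further comparison is needed.
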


\section{Proof of Theorem \ref{thm:main}} We are now position to prove our main theorem. 
\section*{Proof of Theorem \ref{thm:main} (i)}
We prove that if ${\bm x}=(x_1,x_2)=\left(\dfrac{1}{2},\dfrac{1}{2}\right)$, 
${\bm x}$ is equilibrium.

\begin{proof}
	
\begin{enumerate}
	\renewcommand{\labelenumi}{(\roman{enumi})}

	\item We consider vendor $1$. We then notice that $f_1({\bm x})=\dfrac{1}{2}$. For any $x_1'\in{\left[0,\dfrac{1}{2}\right)}$, we have 
		\begin{align*}
		f_1(x_1',x_2)=|[0,x_1']|+\dfrac{1}{2}|[x_1',x_2]|
		<|[0,x_1']|+|[x_1',x_2]|
		=|[0,x_2]|
		=\dfrac{1}{2}=f_1({\bm x}).
		\end{align*}
		For any $x_1'\in{\left(\dfrac{1}{2},1\right]}$,
		from the symmetry of (a), we can similarly prove that $f_1(x_1',x_2)<f_1({\bm x})$.	
Therefore, for any $x_1'\in{[0,1]}$, we have $f_1({\bm x})\geq f_1(x_1',x_2)$.
	\item Next, we consider vendor $2$. For any $x_2'\in{[0,1]}$, we find that $f_2({\bm x})\geq f_2(x_1,x_2')$ in a similar manner to (i).
\end{enumerate}
From $(\mathrm{i})$ and $(\mathrm{ii})$, we have showed that $\left(\dfrac{1}{2},\dfrac{1}{2}\right)$ is equilibrium. \end{proof}

Next, we show that ${\bm x}=\left(\dfrac{1}{2},\dfrac{1}{2}\right)$ is a necessary condition for equilibrium. Therefore, we prove that if ${\bm x}\ne\left(\dfrac{1}{2},\dfrac{1}{2}\right)$, then ${\bm x}$ is not equilibrium.	
\begin{proof}
From \eqref{eqn:pro1}, when $x_1=0$ or $x_2=1$, ${\bm x}$ is not equilibrium. Therefore, 
we treat with the case where $x_1\neq0$ or $x_2\neq1$.
\begin{enumerate}
	\renewcommand{\labelenumi}{(\roman{enumi})}
	\item $x_1\neq x_2$\\
	From \eqref{eqn:pro3}, $ {\bm x} $ is not equilibrium in this case.
	
	\item $x_1=x_2$
	
We first notice that $f_1({\bm x})=\dfrac{1}{2}$ in this case.
	If $x_1>\dfrac{1}{2}$, setting $x_1'=\dfrac{1}{2}$, we obtain 
		\begin{align*}
		f_1(x_1',x_2)=|[0,x_1']|+\dfrac{1}{2}|[x_1',x_2]|
		>|[0,x_1']|
		=\dfrac{1}{2}=f_1({\bm x}).
		\end{align*}
		If $x_1<\dfrac{1}{2}$,		
		from symmetry, we can show that there exits $x_1'\in{[0,1]}$ such that $f_1(x_1',x_2)>f_1({\bm x})$.

\end{enumerate}
From the above, we have proved that $\left(\dfrac{1}{2},\dfrac{1}{2}\right)$ is a necessary condition for equilibrium.
\end{proof}

\section*{Proof of Theorem \ref{thm:main} (ii)}
	If $n=3$, \eqref{eqn:pro2} contradicts \eqref{eqn:pro3}. Therefore, we conclude that 
	there exists no equilibrium in this case.

\section*{Proof of Theorem \ref{thm:main} (iii)}
Finally, we are concerned with the case where $n\geq4$.
\begin{proof}
	First, it follows from Lemma \ref{lem:3} that \eqref{eqn:main1} and \eqref{eqn:main2} is a sufficient condition for equilibrium.

	Next, we show that \eqref{eqn:main1}--\eqref{eqn:main2} is a necessary condition for equilibrium. Therefore, we prove that if \eqref{eqn:main1}--\eqref{eqn:main2} do not hold, then ${\bm x}$ is not equilibrium.
	Observing Proposition \ref{pro:1}, we do not have to treat with the case where $x_1=0$ or $x_n=1$ or more than $2$ vendors occupy a location. From this reason, we assume that $|I_0|>0$. 
	We prove the following cases. 
	\begin{enumerate}
		\renewcommand{\labelenumi}{(\roman{enumi})}
		\item Condition \eqref{eqn:main1} does not hold.
		\begin{enumerate}
			\renewcommand{\labelenumii}{(\alph{enumii})}
			\item $|I_1|\neq0\quad($resp. $|I_{n-1}|\neq0)$.
			\item $|I_1|=|I_{n-1}|=0$ and $|I_0|:|I_2|\neq1:2\quad($resp. $|I_1|=|I_{n-1}|=0$ and $|I_{n-2}|:|I_n|\neq2:1)$.
			\item $|I_1|=|I_{n-1}|=0$ and $|I_0|:|I_2|=1:2$ and $|I_{n-2}|:|I_n|=2:1$ and $|I_0|\neq|I_n|$.
		\end{enumerate}
		\item Condition \eqref{eqn:main1} holds and condition \eqref{eqn:main2} does not hold.
		\begin{enumerate}
			\renewcommand{\labelenumii}{(\alph{enumii})}
			\item Condition \eqref{eqn:main1} holds and $|I_j|>2|I_0|$.
			\item Condition \eqref{eqn:main1} holds and $|I_j|\leq2|I_0|$ and $2|I_0|>|I_k|+|I_{k+1}|$.
		\end{enumerate}
	\end{enumerate}
	Dividing this proof into the above cases, we prove our main theorem.
	
	\begin{enumerate}
		\renewcommand{\labelenumi}{(\roman{enumi})}
		\item 
		\begin{enumerate}
			\renewcommand{\labelenumii}{(\alph{enumii})}
			\item $|I_1|\neq0$\quad$($resp. $|I_{n-1}|\neq0)$

			From \eqref{eqn:pro3} and $x_1<x_2$ $($resp. $x_{n-1}<x_n)$, $ {\bm x} $ is not equilibrium.		
			
			\item $|I_1|=|I_{n-1}|=0$ and $|I_0|:|I_2|\neq1:2\;($resp. $|I_1|=|I_{n-1}|=0$ and $|I_{n-2}|:|I_n|\neq2:1$)
			\begin{enumerate}
				\renewcommand{\labelenumiii}{(\arabic{enumiii})}
				\item $|I_0|:|I_2|=|I_0|:(2|I_0|+\delta)$\quad$(\delta>0)$
				

				We notice that $f_1({\bm x})=\dfrac{1}{2}\left\{|I_0|+\dfrac{1}{2}(2|I_0|+\delta)\right\}=|I_0|+\dfrac{1}{4}\delta$. Setting $x_1'\in{(x_2,x_3)}$, we have 
				\begin{align*}
				f_1(x_1',x_2,\cdots,x_n)=\dfrac{1}{2}(2|I_0|+\delta)
				=|I_0|+\dfrac{1}{2}\delta
				>|I_0|+\dfrac{1}{4}\delta=f_1({\bm x}).
				\end{align*}
				
				\item $|I_0|:|I_2|=|I_0|:(2|I_0|-\delta)$\quad$(\delta>0)$
				

				We notice that $f_1({\bm x})=\dfrac{1}{2}\left\{|I_0|+\dfrac{1}{2}\left(2|I_0|-\delta\right)\right\}=|I_0|-\dfrac{1}{4}\delta$. Setting $x_1'=|I_0|-\dfrac{1}{4}\delta$, we have 
				\begin{align*}
				f_1(x_1',x_2,\cdots,x_n)&=|[0,x_1']|+\dfrac{1}{2}|[x_1',x_2]|>|[0,x_1']|\\
				&=|I_0|-\dfrac{1}{4}\delta=f_1({\bm x}).
				\end{align*}
			\end{enumerate}
			From the symmetry of (1)--(2), we can similarly show in the case where $|I_{n-2}|:|I_n|\neq2:1$. Thus, ${\bm x}$ is not equilibrium in the case of (b).
			
			\item $|I_1|=|I_{n-1}|=0$ and $|I_0|:|I_2|=1:2$ and $|I_{n-2}|:|I_n|=2:1$ and $|I_0|\neq|I_n|$
			\begin{enumerate}
				\renewcommand{\labelenumiii}{(\arabic{enumiii})}
				\item $|I_0|<|I_n|$
				

				We notice that $f_1({\bm x})=\dfrac{1}{2}\left(|I_0|+\dfrac{1}{2}\cdot2|I_0|\right)=|I_0|$. Setting $x_1'=1-|I_0|$, we have 
				\begin{align*}
				f_1(x_1',x_2,\cdots,x_n)=\dfrac{1}{2}|[x_n,x_1']|+|[x_1',1]|
				>|[x_1',1]|
				=|I_0|=f_1({\bm x}).
				\end{align*}
				
				\item $|I_0|>|I_n|$\\
				From the symmetry, we can similarly show that there exists $x_n'$ such that $f_n(x_1,\cdots,x_{n-1},x_n')>f_n({\bm x})$.
			\end{enumerate}
			From (1)--(2), ${\bm x}$ is not equilibrium in the case of (c).
		\end{enumerate}
		
		\item Condition \eqref{eqn:main1} holds and condition \eqref{eqn:main2} does not hold.
		\begin{enumerate}
			\renewcommand{\labelenumii}{(\alph{enumii})}
			\item \eqref{eqn:main1} holds and $|I_j|>2|I_0|$

			We notice that $f_1({\bm x})=\dfrac{1}{2}\left(|I_0|+\dfrac{1}{2}\cdot2|I_0|\right)=|I_0|$. Setting $x_1'\in{(x_j,x_{j+1})}$, we have \\
			$f_1(x_1',x_2,\cdots,x_n)=\dfrac{1}{2}|I_j|>\dfrac{1}{2}\cdot2|I_0|=|I_0|=f_1({\bm x})$.\\
			Thus, ${\bm x}$ is not equilibrium in this case.
			\item \eqref{eqn:main1} holds and $|I_j|\leq2|I_0|$ and $2|I_0|>|I_k|+|I_{k+1}|$

			When $k=1,2,n-1,n$, we notice that $|I_1|+|I_2|=2|I_0|$ and $|I_2|+|I_3|\geq2|I_0|$. Thus, we devote to considering $3\leq k\leq n-2$.
			In view of \eqref{eqn:pro2}, we divide (b) into  
			the following three parts.
		
			\begin{enumerate}
				\renewcommand{\labelenumiii}{(\arabic{enumiii})}
				\item $x_{k-1}\neq x_k$ and $x_k\neq x_{k+1}$
				

				We notice that $f_k({\bm x})=\dfrac{1}{2}(|I_k|+|I_{k+1}|)<\dfrac{1}{2}\cdot2|I_0|=|I_0|$. Setting $x_k'\in{(x_2,x_3)}$, we have $f_k(x_1,\cdots,x_{k-1},x_k',x_{k+1},\cdots,x_n)=\dfrac{1}{2}\cdot2|I_0|=|I_0|>f_k({\bm x})$.
				Thus ${\bm x}$ is not equilibrium in this case.
				
				\item $x_{k-1}=x_k$\quad$(x_{k-2}\neq x_{k-1}$ and $x_k\neq x_{k+1})$
				
		
				From $|I_k|+|I_{k+1}|<2|I_0|$, we find $|I_k|<2|I_0|$. From $|I_j|\leq 2|I_0|$, we notice that 
				$|I_{k-2}|\leq2|I_0|$. It follows that $f_k({\bm x})=\dfrac{1}{4}(|I_{k-2}|+|I_k|)<\dfrac{1}{4}(2|I_0|+2|I_0|)=|I_0|$.
				Therefore, for $x_k'\in{(x_2,x_3)}$, we have \linebreak
				$f_k(x_1,\cdots,x_{k-1},x_k',x_{k+1},\cdots,x_n)=\dfrac{1}{2}\cdot2|I_0|=|I_0|>f_k({\bm x})$.
				This means that ${\bm x}$ is not equilibrium in this case.
				\item $x_k=x_{k+1}$\quad$(x_{k-1}\neq x_k$ and $x_{k+1}\neq x_{k+2})$
				
We can prove this case in a similar manner to (2).
			\end{enumerate}		
		\end{enumerate}
	\end{enumerate}

	From (i)--(ii), we have showed that if condition \eqref{eqn:main1} or condition \eqref{eqn:main2} do not hold, then ${\bm x}$ is not equilibrium.

	We can complete the proof of Theorem \ref{thm:main}.
\end{proof}

\appendix

\section{Proof of Lemma \ref{lem:2}}
\begin{proof}
We estimate the profit of each vendor, $f_k({\bm x})\quad(k=1,2,\ldots,n)$.
\begin{enumerate}
	\renewcommand{\labelenumi}{(\roman{enumi})}
	\item Vendor $1$
	

	We have $f_1({\bm x})=\dfrac{1}{2}(|I_0|+\dfrac{1}{2}\cdot2|I_0|)=|I_0|$.
	\item Vendor $2$
	
	We can similarly deduce $f_2({\bm x})=|I_0|$.
	\item Vendor $n-1$ and $n$

	From the symmetry with vendors $1$ and $2$, we have $f_{n-1}({\bm x})=f_n({\bm x})=|I_0|$.
	\item Vendor $k$ $(3\leq k\leq n-2)$
	\begin{enumerate}
		\renewcommand{\labelenumii}{(\alph{enumii})}
		\item $x_{k-1}\neq x_k$ and $x_k\neq x_{k+1}$
	

		We have $f_k({\bm x})=\dfrac{1}{2}(|I_{k-1}|+|I_k|)\geq\dfrac{1}{2}\cdot2|I_0|=|I_0|$.
		\item $x_{k-1}=x_k,x_k\neq x_{k+1}$\quad$(k\neq3)$
	

		From \eqref{eqn:main2}, we notice that $x_{k-2}\neq x_{k-1}$. 
		On the other hand, from $|I_{k-1}|=0$ and \eqref{eqn:main2}, we have $|I_{k-2}|\leq2|I_0|$ and $2|I_0|\leq|I_{k-2}|$. Thus we have $|I_{k-2}|=2|I_0|$. Similarly, we have $|I_k|=2|I_0|$.
		As a consequence, we have 
		$f_k({\bm x})=\dfrac{1}{4}(|I_{k-2}|+|I_k|)=\dfrac{1}{4}\cdot4|I_0|=|I_0|$.\\

		\item $x_{k-1}\neq x_k,x_k=x_{k+1}$\quad$(k\neq n-2)$
	

		\vspace*{3ex}
		We can show that $f_k({\bm x})=|I_0|$ in a similar manner to (b).
		
	\end{enumerate}
\end{enumerate}

Combining (i)--(iv), we obtain 
\[
\begin{cases}
f_1({\bm x})=f_2({\bm x})=f_{n-1}({\bm x})=f_n({\bm x})=|I_0|.\\
f_k({\bm x})=|I_0|\quad\text{if}\; x_{k-1}=x_k\;\text{or}\;x_k=x_{k+1}\quad
(3\leq k\leq n-3).\\
f_k({\bm x})\geq|I_0|\quad\text{if}\;x_{k-1}\neq x_k\;\text{and}\;x_k\neq x_{k+1}\quad(3\leq k\leq n-3).
\end{cases}
\]
Thus, for ${\bm x}$ satisfying $\eqref{eqn:main1}$ and $\eqref{eqn:main2}$ and any $k$ $(1\leq k \leq n)$, we have showed that $f_k({\bm x})\geq|I_0|$.

\end{proof}

\section{Proof of Lemma \ref{lem:3}}
\begin{proof}
	Dividing the proof into three parts, we prove this lemma.		
	\begin{enumerate}
		\item $f_k(x_k\rightarrow(x_l,x_{l+1}))\leq f_k({\bm x})\quad(x_l<x_{l+1}$ and $l\neq k-1,k$ and $0\leq l\leq n)$. 
		\item $f_k(x_k\rightarrow(x_{k-1},x_{k+1}))\leq f_k({\bm x})$\quad$(1\leq k\leq n)$.
		\item $f_k(x_k\rightarrow\{x_l\})\leq f_k({\bm x})$\quad$(1\leq k\leq n$ and $0\leq l\leq n+1)$.
	\end{enumerate}
	{\it Proof of $(\mathrm{i})$}
	\begin{enumerate}
		\renewcommand{\labelenumi}{(\alph{enumi})}
		\item $l\neq0$ and $l\neq n$

		We have $f_k(x_k\rightarrow(x_l,x_{l+1}))=\dfrac{1}{2}|I_l|\leq\dfrac{2|I_0|}{2}=|I_0|\leq f_k({\bm x})$.
		\item $l=0$

		We have $f_k(x_k\rightarrow(x_0,x_1))<|I_0|\leq f_k({\bm x})$.
		\item $l=n$
		
		We have $f_k(x_k\rightarrow(x_n,x_{n+1}))<|I_0|\leq f_k({\bm x})$.
	\end{enumerate}

	{\it Proof of $(\mathrm{ii})$}
	\begin{enumerate}
		\renewcommand{\labelenumi}{(\alph{enumi})}
		\item $k\neq1$ and $k\neq n$
		\begin{enumerate}	\renewcommand{\labelenumii}{(\arabic{enumii})}
			\item If $x_{k-1}\neq x_k$ and $x_k\neq x_{k+1}$, we have $f_k(x_k\rightarrow(x_{k-1},x_{k+1}))=f_k({\bm x})$;
			\item If $x_{k-1}=x_k\;(x_k\neq x_{k+1})$, since $|I_k|=2|I_0|$ from \eqref{eqn:main2} and $|I_{k-1}|=0$, we have $f_k(x_k\rightarrow(x_{k-1},x_{k+1}))=\dfrac{1}{2}\cdot2|I_0|=|I_0|\leq f_k({\bm x})$;
			\item If $x_k=x_{k+1}\quad(x_{k-1}\neq x_k)$, we can deduce  $f_k(x_k\rightarrow(x_{k-1},x_{k+1}))=|I_0|\leq f_k({\bm x})$ in a similar manner to (2).
		\end{enumerate}

		\item $k=1$
		
		We have $f_1(x_1\rightarrow(x_0,x_2))<|I_0|\leq f_1({\bm x})$.
		\item $k=n$

		We have $f_n(x_n\rightarrow(x_{n-1},x_{n+1}))<|I_0|\leq f_n({\bm x})$.
	\end{enumerate}

	{\it Proof of $(\mathrm{iii})$}
	\begin{enumerate}
		\renewcommand{\labelenumi}{(\alph{enumi})}
		\item $l\neq0,1,2,n-1,n,n+1$
			\begin{enumerate}
			\renewcommand{\labelenumii}{(\arabic{enumii})}
			\item $l=k$   
			
			It clearly holds that $f_k(x_k\rightarrow\{x_l\})=f_k({\bm x})$.
			\item $l=k-1,k+1$ and there exists another vendor at $x_k$ except for vendor $k$, or $l\ne k-1,k,k+1$
			
			We deduce from $\eqref{eqn:main2}_1$ that $f_k(x_k\rightarrow\{x_l\})\leq\dfrac{1}{4}\cdot4|I_0|=|I_0|\leq f_k({\bm x})$.
			\item $l=k-1,k+1$ and there exists no vendor at $x_k$ except for vendor $k$.
			
			
			We deduce from $\eqref{eqn:main2}_1$ that $f_k(x_k\rightarrow\{x_l\})\leq\dfrac{1}{4}(2|I_0|+2f_k({\bm x}))\leq\dfrac{1}{4}(2f_k({\bm x})+2f_k({\bm x}))=f_k({\bm x})$.
			
	    	\end{enumerate}
    	\item $l=0\;($resp.$\;l=n+1)$
    	
    	We have $f_k(x_k\rightarrow\{x_0\})=\dfrac{1}{2}|I_0|<|I_0|\leq f_k({\bm x}).$\\
    	(resp.$\;f_k(x_k\rightarrow\{x_{n+1}\})=\dfrac{1}{2}|I_0|<|I_0|\leq f_k({\bm x}))$
    	\item $l=1,2\;($resp.$\;l=n-1,n)$
    		\begin{enumerate}
    		\renewcommand{\labelenumii}{(\arabic{enumii})}
    		\item $k=3\;($resp.$\;k=n-2)$
    		
    		We have $f_3(x_3\rightarrow\{x_l\})=\dfrac{1}{3}(|I_0|+\dfrac{1}{2}\left(|I_2|+|I_3|)\right)\leq\dfrac{1}{3}(|I_0|+\dfrac{1}{2}(2|I_0|+2|I_0|))=|I_0|\leq f_3({\bm x}).$\\
    		(resp.$\;f_{n-2}(x_{n-2}\rightarrow\{x_l\})=\dfrac{1}{3}(|I_n|+\dfrac{1}{2}(|I_{n-3}|+|I_{n-2}|)\leq f_{n-2}({\bm x}))$
    		\item $k\neq3\;($resp.$\;k\neq n-2)$
    		
    		We have $f_k(x_k\rightarrow\{x_l\})\leq\dfrac{1}{2}(|I_0|+\dfrac{1}{2}|I_2|)=\dfrac{1}{2}(|I_0|+\dfrac{1}{2}\cdot2|I_0|)=|I_0|\leq f_k({\bm x}).$\\
    		(resp.$\;f_k(x_k\rightarrow\{x_l\})\leq\dfrac{1}{2}(|I_0|+\dfrac{1}{2}|I_{n-2}|)\leq f_k({\bm x}))$
         	\end{enumerate}
	\end{enumerate}
		We can complete the proof of Lemma \ref{lem:3}.
\end{proof}

\section*{Acknowledgements} 
The authors would like to thank Prof. Suzuki for his kind help and comments.


\begin{thebibliography}{99}
	%
	%
	\bibitem{A}Alonso, W., 1964, Location theory, in J. Friedmann and A. W. Alonso 
	(eds.), Regional Development and Planning: A Reader, Cambridge University Press (Cambridge) 
	\bibitem{EL1}Eaton, B. C. and Lipsey, R. G., 1972, The Principle of Minimum Differentiation Recon- sidered: Some New Developments in the Theory of Spatial Competition, Queen's Institute for Economic Research, Discussion Paper No. 87 
	\bibitem{EL2}Eaton, B. C. and Lipsey, R. G., 1975, The principle of minimum differentiation reconsidered : some new developments in the theory of spatial cometition, Review of Economic Studies, 42, 27--49. \bibitem{H}Hotelling, H., 1929, Stability in competition, Economic Journal, 39, 41-57.
\end{thebibliography}
\end{document}